\title{A geometrical view of I/O logic \\ (Nr 432)}
\author{D. Gabbay, X. Parent and L. van der Torre}
\newtheorem{defi}{Definition}
\newtheorem{coro}[defi]{Corollary}
\newtheorem{exa}[defi]{Example}
\newtheorem{ass}[defi]{Assumption}
\newtheorem{conj}[defi]{Conjecture}
\newtheorem{rem}[defi]{Remark}
\newtheorem{theo}[defi]{Theorem}
\newtheorem{problem}{Problem}
\newtheorem{Obs}{Observation}
\newtheorem{Le}{Lemma}
\begin{document}
\maketitle

\section{Introduction}

A frequent belief about I/O logic is that it presupposes classical propositional
logic. This is clearly  a
misunderstanding. It is true that in their seminal paper Makinson and van der
Torre~\cite{mvt2000} found it more convenient to build I/O logic on
top of classical logic. One reason why is that it is a simple and
well-defined framework. However in the study \cite{Parent2014} we have shown  the base logic can also be 
intuitionistic propositional logic. Three of the four standard I/O operations defined by
Makinson and van der Torre are covered. 
It is shown that the axiomatic characterizations available for them hold also for intuitionistic logic. Of course, some important changes must be made to the semantics. In particular, the role played by maximal consistent set is now played by the notion of saturated set. In \cite{imla17} the second author goes one step further, and  shows that the intuitionistic analog of the so-called ``basic" I/O operation $out_2$ has a modal translation into de Paiva's system of constructive modal logic CK \cite{depaiva}.

In this note we  investigate if one can use an algebraic setting  instead of a logic setting.  
The main
result obtained so far concerns the simple-minded output operation
$out_1$. First, we show how to define it within this new
set-up. Second, we argue that all that is needed to get
a complete axiomatization is to be in a lattice. In terms of
connectives, this means that only conjunction is required. This is
truly a generalization of the initial set-up. For classical logic
corresponds to the special case of a lattice that is a boolean
algebra. This one could equally be an Heyting algebra.


The interesting thing about algebras is that they allow us to approach
I/O logic from a geometrical point of view. An algebra is mainly about
moving from points to points along a relation $\geq$. The travelling
always goes in the upwards direction.  We can think of the set $G$ of
generators used in I/O logic as ``jump" points or bridges. A pair
$(a,x)\in G$ is an instruction to deviate from a path that would be
normally taken. Once we have reached node $a$, instead of continuing
up we jump to an unrelated node $x$ and continue your journey upwards.

There is a connection with the Lindhal algebraic approach to normative
systems (see beginning of section~\ref{out1}). 


\section{Background}

We start with some background.
\begin{defi}[Lattice]\label{lat} A lattice is a structure $\langle \mathcal{L}, \geq,\wedge, 0,1\rangle$ where
\begin{itemize}
\item [i)] $\mathcal{L}$ is a set of points, $a$, $b$, ...
\item  [ii)]$\geq\subseteq \mathcal{L}\times\mathcal{L}$ is reflexive and transitive
\item  [iii)] $0$ and $1$ are the bottom and top element respectively 
\item  [iv)] any two elements $a$ and $b$ have 
a greatest lower bound or infimum or meet (denoted by $a\wedge b$) defined by 
\begin{itemize}
\item [\;] $a\wedge b = \inf (a,b)= \max\{z\mid z\leq a \,\&\, z\leq b \}$
\end{itemize}
\end{itemize}
\end{defi}
For $``\inf (a,b)"$ read ``the infimum of  $\{a,b\}$". The above definition is
for pairs, but can easily be generalized to sets of arbitrary size:
$$
\inf A = \max \{x : \forall y \;( y\in A \rightarrow x\leq y\}
$$
Note that the infimum of the emptyset is $1$. This is because, for 
$A=\emptyset$, the implication  $y\in A \rightarrow x\leq y$ holds for any
$x\in \mathcal{L}$, and thus $\inf A = \max \mathcal{L}$.



An ordered set that has properties iii) and iv) is usually called a
lattice.\footnote{Cf.~\cite[p.75]{cl00}.}



If $x\leq y$, we say that $x$ is below $y$ (or $y$ is above $x$).

The notation ``$\uparrow a$" will be for the upset of $a$, viz $\{x:
a\leq x\}.$ \\

The function $G(.)$ (which is a characteristic part of I/O logic)
takes a subset $A$ of $\mathcal{L}$ as argument, and delivers the set
of points to which we can ``jump" starting from these elements:
\begin{flalign}
G(A)=\{x\mid (a,x)\in G \mbox{ for some } a\in A\} \tag{Jump}
\end{flalign}

Before redefining the input/output operations using this apparatus, we
need to clarify what the counterpart of the consequence relation $Cn$
is within this set-up. When the premisses set is a single point $a$ in
the algebra, it is okay to identify $Cn$ with $\geq$, and say that the
consequences of $a$ are just all the points above it. But this leaves
unanswered the question of what the consequences of a \emph{set} of
formulae (or points) are. Our suggestion is to identify $Cn(A)$ with
the upset of the infimum of $A$, $\uparrow \inf A$. For instance, if
$A=\{a,b\}$, the analogue of the consequence set of $A$ is $\uparrow
\inf \{a,b\}=\uparrow a\wedge b$.

Note that a set $A$ bounded from above may not have an infimum. This
is because the ordering is a partial order. So it may happen that,
amongst the lower bounds of $A$, none is greater than all the
others. This is illustrated with Figure~\ref{noinfimum}. Put
$A=\{a_3,a_4\}$. The lower bounds are $a_1$, $a_2$, and $0$. None is
maximal, and so $\inf A=\emptyset$ even though $A$ is bounded from
above by $0$. This possibility is ruled out by condition iv) in Definition \ref{lat}.

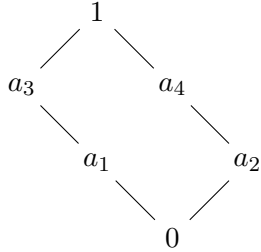
\begin{figure}[ht]
    \caption{$\mathcal{L} = \{1,0,a_1,a_2,a_3,a_4\}$} \label{noinfimum}
    \begin{center}
    \begin{tikzpicture}
        \tikzstyle{every node} = [rectangle]
        \node (s1) at (0,0) {$0$};
        \node (s2) at (-1,1) {$a_1$};
        \node (s3) at (1,1) {$a_2$};
        \node (s4) at (-2,2) {$a_3$};
        \node (s6) at (0,2) {$a_4$};
        \node (s12) at (-1,3) {$1$};
        \foreach \from/\to in {s1/s2, s1/s3, s2/s4, s3/s6, s4/s12, s6/s12}
            \draw[-] (\from) -- (\to);
    \end{tikzpicture}
    \end{center}
\end{figure}

\vspace{0,2cm}
Throughout the remainder of this note we work with a finite $\mathcal{L}$ (and a finite $G$).

We record for future reference the following facts about
infimum. Intuitively, lemma~\ref{factaboutinf} says that if $A$ is a
subset of $B$, then the infimum of $B$ is necessarily below the
infimum of $A$.
\begin{Le} \label{factaboutinf} If $A\subseteq B$, then $\inf B\leq\inf A$.
\end{Le}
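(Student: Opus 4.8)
The plan is to argue directly from the definition of infimum as the greatest lower bound. First I would note that $\inf B$ is, by definition, a lower bound of $B$: for every $y\in B$ we have $\inf B\leq y$. Since $A\subseteq B$, every $y\in A$ is also an element of $B$, so $\inf B\leq y$ holds for every $y\in A$ as well. This says exactly that $\inf B$ belongs to the set $\{x:\forall y\,(y\in A\rightarrow x\leq y)\}$ of lower bounds of $A$.

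Next I would invoke the fact that $\inf A$ is the \emph{maximum} of that set of lower bounds, hence it dominates every member of it; in particular $\inf B\leq\inf A$, which is the desired conclusion. Note that the relevant infima all exist because we are working in a finite lattice (condition iv) of Definition~\ref{lat}, in its generalization to arbitrary subsets), so the argument is not vacuous.

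There is essentially no hard step here; the only points worth a moment's care are the degenerate cases. If $A=\emptyset$ then $\inf A=1$ and the claim reduces to $\inf B\leq 1$, which holds since $1$ is the top element; if $A=B$ the claim is the reflexivity of $\leq$. So the proof is a two-line unwinding of the definition, and I would expect the write-up to be correspondingly short.
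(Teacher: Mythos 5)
Your proof is correct and follows exactly the same route as the paper's: observe that $\inf B$ is a lower bound of $B$, hence of $A$ by inclusion, and conclude by the maximality of $\inf A$ among lower bounds of $A$. The extra remarks on the degenerate cases are fine but not needed.
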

 \begin{proof} Assume $A\subseteq B$. We have $\inf (A)\geq x$ for all $x$ that are below each and every element of $A$. Since $A\subseteq B$, $\inf B$ is below each and every element of $A$. Therefore
 $\inf B\leq\inf A$ as required.
\end{proof}
\begin{Le} \label{fact2aboutinf} If  $x\leq a_1$,..., and
$x\leq a_n$, then $x\leq \inf(a_1,...,a_n)$.
\end{Le}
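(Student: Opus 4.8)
The plan is to unwind the definition of the finite infimum as a greatest lower bound and observe that the hypothesis places $x$ in the very set whose maximum is $\inf(a_1,\dots,a_n)$. Concretely, I would set $L=\{z\in\mathcal{L}\mid z\leq a_i \text{ for every } i\in\{1,\dots,n\}\}$ for the set of common lower bounds of $a_1,\dots,a_n$. By the generalized definition of infimum recorded just after Definition~\ref{lat}, we have $\inf(a_1,\dots,a_n)=\max L$; this maximum genuinely exists, either because $\mathcal{L}$ is finite or by iterating condition iv) of Definition~\ref{lat} to build the finite meet from binary ones.

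The hypothesis $x\leq a_1,\dots,x\leq a_n$ says precisely that $x\in L$. Since $\max L$ is by definition the $\geq$-greatest element of $L$, it is in particular an upper bound of $L$, so $x\leq\max L=\inf(a_1,\dots,a_n)$, which is the claim.

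For readers who prefer to work only with the binary meet $\wedge$ of Definition~\ref{lat}(iv), an equivalent route is induction on $n$. The case $n=1$ is trivial (with $\inf(a_1)=a_1$), and the case $n=2$ is immediate from $a_1\wedge a_2=\max\{z\mid z\leq a_1 \ \&\ z\leq a_2\}$ together with the fact that $x$ belongs to that set. For the inductive step one first notes $\inf(a_1,\dots,a_{n+1})=\inf(a_1,\dots,a_n)\wedge a_{n+1}$, and then combines the inductive hypothesis $x\leq\inf(a_1,\dots,a_n)$ with the assumption $x\leq a_{n+1}$ via the $n=2$ case.

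There is essentially no hard step here: the only point requiring a modicum of care is that $\inf(a_1,\dots,a_n)$ is a legitimate element of $\mathcal{L}$ at all, i.e.\ that the relevant maximum exists — but this is guaranteed by finiteness (equivalently, by iterating condition iv)), after which the statement is just the tautology that the greatest element of a set dominates every member of that set. I therefore expect the ``obstacle'' to be purely presentational, namely deciding whether to phrase the argument through the general $\inf$ or through iterated $\wedge$.
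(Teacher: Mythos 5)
Your main argument is correct and is essentially the paper's own proof: both unwind the definition of $\inf(a_1,\dots,a_n)$ as the maximum of the set of common lower bounds and observe that the hypothesis places $x$ in that set, whence $x\leq\inf(a_1,\dots,a_n)$. The additional remarks on existence of the maximum and the alternative induction via binary meets are fine but not needed beyond what the paper already assumes.
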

 \begin{proof} Assume that $x$ is below all the $a_i$'s. 
By definition, $\inf(a_1,...,a_n)$ is above any point $y$ that is
below all the $a_i$'s. Therefore, $\inf(a_1,...,a_n)$ is above $x$.
\end{proof}
We are now ready to show that the analogues of the Tarskian properties
for $Cn$ hold for $\uparrow \inf$; hence the idea of using one for the
other.
\begin{Obs}[Tarskian properties] \label{properties} We have
\begin{flalign}
i) \;& A\subseteq \uparrow \inf A \tag{inclusion} \\
ii) & \mbox{ If } A\subseteq B  \mbox{ then } \uparrow \inf A \subseteq\uparrow \inf B\tag{monotony} \\
iii)  \;&  \uparrow \inf  \uparrow \inf A = \uparrow \inf A\tag{idempotence} 
\end{flalign}
\end{Obs}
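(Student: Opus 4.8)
The plan is to prove the three Tarskian properties directly from the definition of $\uparrow\inf$ together with Lemmas~\ref{factaboutinf} and~\ref{fact2aboutinf}.

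\medskip

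\noindent\textbf{Inclusion.} First I would fix $a\in A$ and show $a\in\uparrow\inf A$, i.e.\ $\inf A\leq a$. This is immediate from the definition of $\inf A$ as the greatest lower bound of $A$: every element of $A$ is in particular an upper bound of $\inf A$. So $a\geq\inf A$, which is exactly $a\in\uparrow\inf A$.

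\medskip

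\noindent\textbf{Monotony.} Assume $A\subseteq B$. By Lemma~\ref{factaboutinf}, $\inf B\leq\inf A$. Now take any $x\in\uparrow\inf A$, i.e.\ $\inf A\leq x$; by transitivity of $\geq$ (clause ii of Definition~\ref{lat}) we get $\inf B\leq x$, i.e.\ $x\in\uparrow\inf B$. Hence $\uparrow\inf A\subseteq\uparrow\inf B$.

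\medskip

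\noindent\textbf{Idempotence.} The inclusion $\uparrow\inf A\subseteq\uparrow\inf\uparrow\inf A$ is an instance of inclusion applied to the set $\uparrow\inf A$. For the converse, the key observation is that $\inf\uparrow\inf A=\inf A$: indeed $\inf A$ is a lower bound of $\uparrow\inf A$ (every element $x$ of $\uparrow\inf A$ satisfies $\inf A\leq x$ by definition), and it is the greatest one, since any lower bound of $\uparrow\inf A$ is in particular $\leq\inf A$ because $\inf A\in\uparrow\inf A$ by reflexivity. Once $\inf\uparrow\inf A=\inf A$ is established, taking the upset of both sides gives $\uparrow\inf\uparrow\inf A=\uparrow\inf A$, which finishes the proof. (Alternatively, for the $\subseteq$ direction one can combine monotony with inclusion: from $A\subseteq\uparrow\inf A$ and monotony, $\uparrow\inf A\subseteq\uparrow\inf\uparrow\inf A$, and one still needs the reverse, so the identity $\inf\uparrow\inf A=\inf A$ is the cleanest route.)

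\medskip

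\noindent I do not expect any serious obstacle here; all three parts are routine unwindings of the definitions. The only point requiring a moment's care is idempotence, where one must be careful to use both that $\inf A$ is a \emph{lower bound} of $\uparrow\inf A$ and that $\inf A$ itself \emph{belongs to} $\uparrow\inf A$ (by reflexivity of $\geq$) in order to pin down $\inf\uparrow\inf A$ exactly as $\inf A$ rather than merely bounding it on one side.
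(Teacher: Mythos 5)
Your proofs of inclusion and monotony coincide with the paper's: inclusion is read off from the definition of the infimum as a lower bound, and monotony is Lemma~\ref{factaboutinf} plus transitivity. For idempotence the non-trivial direction is handled differently. The paper takes $x\in\uparrow\inf\uparrow\inf A$, invokes the standing finiteness assumption to write $\uparrow\inf A=\{a_1,\dots,a_n\}$, notes $\inf A\leq a_i$ for each $i$, applies Lemma~\ref{fact2aboutinf} to get $\inf A\leq\inf(a_1,\dots,a_n)$, and concludes by transitivity. You instead prove the identity $\inf\uparrow\inf A=\inf A$ outright: $\inf A$ is a lower bound of $\uparrow\inf A$ by definition of the upset, and it is the greatest one because $\inf A\in\uparrow\inf A$ by reflexivity, so any lower bound of $\uparrow\inf A$ sits below it. This is correct, and it buys something the paper's argument does not: it nowhere uses that $\mathcal{L}$ (or $\uparrow\inf A$) is finite, only that the relevant infima exist, so the identity $\uparrow\inf\uparrow\inf A=\uparrow\inf A$ would survive in any complete lattice. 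The paper's route, by contrast, stays entirely within the two lemmas it has already recorded and avoids having to argue that $\inf A$ realizes the infimum of its own upset. One pedantic caveat for your version: since $\geq$ is only assumed reflexive and transitive (not antisymmetric), "$\inf\uparrow\inf A=\inf A$" should strictly be read as mutual comparability of the two infima, which still yields equality of their upsets by transitivity, so the conclusion stands.
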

 \begin{proof} For inclusion, let $a\in A$. 
By definition of infimum, $\inf A\leq a$ since $a\in A$. So
$a\in\uparrow \inf A$.

For monotony, assume $A\subseteq B$, and let $a\in\uparrow\inf A$. The
former means $\inf A\leq a$.  By lemma~\ref{factaboutinf}, $\inf
B\leq\inf A$. By transitivity of $\leq$, $\inf B\leq a$, and thus
$a\in\uparrow\inf B$ as required.

The right-in-left inclusion of idempotence is just inclusion. For the
left-in-right inclusion, let $x\in \uparrow \inf \uparrow \inf A$. Since $\mathcal{L}$ is finite
, $x\in \uparrow \inf (a_1, ..., a_n)$ where
$\uparrow \inf A = \{a_1, ..., a_n\}$. So 
$\inf (a_1, ..., a_n)\leq x$ and  $\inf A\leq a_i$ for all $i=1, ...,n$. By
Lemma~\ref{fact2aboutinf}, $\inf A\leq \inf (a_1, ..., a_n)$. By
transitivity of $\leq$, $\inf A\leq x$ as required.
\end{proof}

\section{Geometrical simple-minded
\protect\footnotemark{}\protect\footnotetext{The name
  ``simple-minded'' is from Makinson and van der Torre~\cite{mvt2000}}}\label{out1}
\subsection{Analogy with the Lindahl-Odelstal algebraic approach}
We start wih the analogy between I/O logic and the Lindahl-Odelstal
algebraic approach to normative systems (see, e.g.,~\cite{LO2000}). It
puts things in perspective. They call their model the ``cis-model'',
where ``cis'' abbreviates conceptual implicative structures. There a
normative system is viewed as consisting of a system $B_1$ of
potential grounds (or descriptive conditions), and a system $B_2$ of
potential consequences (or normative effects). The set of norms are
the set $J$ of links or joinings from the system of grounds to the
system of consequences. The cis-model is illustrated in
Figure~\ref{csi}, where a norm is represented by an arrow from one
system to the other.
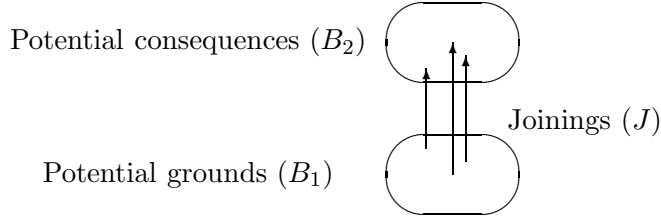
\begin{figure}[h]
\begin{center}
\begin{picture}(200,80)(00,00)

\put(100,50){\oval(50,30)}

\put(00,50){\makebox(0,0){Potential consequences ($B_2$)}}

\put(100,00){\oval(50,30)}
\put(105,5){\vector(0,1){40}}
\put(90,10){\vector(0,1){30}}
\put(100,0){\vector(0,1){50}}

\put(00,00){\makebox(0,0){Potential grounds ($B_1$)}}

\put(150,20){\makebox(0,0){Joinings ($J$)}}
\end{picture}
\end{center}
\caption[The cis-model]{The cis-model} \label{csi}
\end{figure} 
There is here an obvious connection with the simple-minded output operation $out_1$
described in~\cite{mvt2000}. Recall the
calculation of $out_1$ involves three main steps. First, input set $A$
is expanded to its classical closure $Cn(A)$. This corresponds to the
fact that grounds are ordered by an implication relation in the
cis-model. Next, the set obtained at step 1 is passed into a ``black
box", which delivers the corresponding deontic output. This is done by
taking the image of $Cn(A)$ under $G$. This is exactly what Lindahl
and Odelstal call ``joining''. Finally, the set of ouputs obtained at
step 2 is expanded to its classical closure again. This last step
corresponds to the fact that deontic consequences are also ordered by
an implicative relation. 

The above gives an idea of what a geometrical view of 
$out_1$ can be. It remains to fill in the details.

\subsection{Definition and result}\label{mainresult}

Below we reformulate the simple-minded output operation using lattices.
\begin{defi}[Simple-minded I/O]\label{simple-minded}
$$
out_1 (G,A)=\uparrow \inf(G(\uparrow\inf (A))
$$
\end{defi}
This definition mirrors the definition given  in~\cite{mvt2000}:
$$
out_1 (G,A)=Cn(G(Cn(A))
$$
There are three steps involved in the construction:
\begin{itemize}
\item Determine the upset of the infimum of $A$ $-$ this corresponds to the step $Cn(A)$;
\item Apply $G$ to the set obtained at step 1, viz. collect the nodes $x$
to which  a jump is allowed $-$ this corresponds to the step $G(Cn(A))$;
\item Take the upset of the infimum of the jump points obtained at step 2 
$-$ this corresponds to the step $Cn(G(Cn(A)))$;
\end{itemize}
In the particular case where the input set is singleton $a$, $\inf a = a$. This follows from reflexivity of
$\geq$. So definition~\ref{simple-minded} simplifies into:
$$
out_1 (G,a)=\uparrow \inf(G(\uparrow a))
$$ 
Below we give some examples where we relax the assumption $\mathcal{L}$ be finite. This is for illustration purpose only.

\begin{exa} Let $G=\{(p_1,p_2)\}$, where $p_1$ and $p_2$ are two propositional letters. $\mathcal{L}_1$ is the lattice of all the wffs generated from
$p_1$ and $p_2$. $out_1 (G,\{p_1\})$ is the set of wffs that can be
  proved from $p_2$.
\end{exa}

\begin{exa} Consider the set $\mathbb{N}$ of positive integers ordered by the
 divisibility relation $\mid$. The supremum and the infimum are given
 by the greatest common divisor and the least common multiple,
 respectively. $G$ can be any function from $\mathbb{N}$ to
 $\mathbb{N}$, e.g. the single-valued successor function $S$, which
 adds the value 1 to any number. $out_1 (G,x)$ is the set of all the
 numbers that the successor of $x$ divides. Turning upside down the
 lattice, $out_1 (G,x)$ is the set of all the divisors of the successor of
 $x$.
\end{exa}

Below we show that definition~\ref{simple-minded} validates the rules
that are characteristic of the simple minded operaton.
\begin{Obs}[SI and WO] If $out_1$ is taken as in definition~\ref{simple-minded}, then  $out_1$ validates
SI and WO thus 
reformulated \\

$
\begin{array}{ll}
  \textrm{(SI)} & \displaystyle\frac{(a,x) \hspace{1cm}
  b\leq a}{(b, x)} \\\\
  \textrm{(WO)} & \displaystyle\frac{(a,x) \hspace{1cm}
  x\leq y}{(a,y)} \\\\
\end{array}
$
\end{Obs}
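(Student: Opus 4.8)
The plan is to expand the definition of $out_1$ on singleton premisses and reduce each rule to a monotonicity or transitivity fact already in hand. By reflexivity of $\geq$ we have $\inf\{a\}=a$, so ``$(a,x)$'' (read as $x\in out_1(G,\{a\})$) unwinds to $x\in\uparrow\inf(G(\uparrow a))$, i.e. $\inf(G(\uparrow a))\leq x$; likewise ``$(b,x)$'' unwinds to $\inf(G(\uparrow b))\leq x$, and ``$(a,y)$'' to $\inf(G(\uparrow a))\leq y$. The whole argument is then a matter of chaining these reformulations.

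For (WO), assume $\inf(G(\uparrow a))\leq x$ and $x\leq y$. Transitivity of $\leq$ immediately gives $\inf(G(\uparrow a))\leq y$, which is exactly ``$(a,y)$''. So WO needs nothing beyond transitivity. For (SI), assume $\inf(G(\uparrow a))\leq x$ and $b\leq a$. The key step is that the map $\uparrow(\cdot)$ is order-reversing: from $b\leq a$ and transitivity we get $\uparrow a\subseteq\uparrow b$ (any $z$ with $a\leq z$ also has $b\leq a\leq z$). Since the Jump operator $G(\cdot)$ preserves inclusion — a larger input set can only enlarge its image, directly from the defining equation for $G(A)$ — we obtain $G(\uparrow a)\subseteq G(\uparrow b)$. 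Applying the monotony clause of Observation~\ref{properties} then yields $\uparrow\inf(G(\uparrow a))\subseteq\uparrow\inf(G(\uparrow b))$, and since $x$ lies in the left-hand set it lies in the right-hand one; that is ``$(b,x)$''.

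I do not expect a genuine obstacle here: once the singleton case is unfolded, each rule is a two- or three-line diagram chase. The only things worth stating explicitly are the two small monotonicity observations feeding into SI (that $b\leq a$ forces $\uparrow a\subseteq\uparrow b$, and that $G$ is monotone), after which the previously established monotony of $\uparrow\inf$ does the rest. As a remark, the argument is insensitive to the restriction to singleton premisses: replacing $a$ by a set $A$ and $b$ by $B$ with $B$ ``below'' $A$ (in the sense $A\subseteq\uparrow\inf B$), one uses Lemma~\ref{factaboutinf} in place of the elementary fact $\uparrow a\subseteq\uparrow b$, and the same chain goes through.
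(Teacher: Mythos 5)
Your proposal is correct and follows essentially the same route as the paper: WO is pure transitivity of $\leq$, and SI rests on $b\leq a$ forcing $\uparrow a\subseteq\uparrow b$ and hence (via monotonicity of $G(\cdot)$ and of $\uparrow\inf$) $\uparrow\inf(G(\uparrow a))\subseteq\uparrow\inf(G(\uparrow b))$. If anything your SI argument is slightly tidier than the paper's, which tracks the individual witnessing rules and leaves implicit the final step (that enlarging $G(\uparrow a)$ to $G(\uparrow b)$ can only lower the infimum) that you discharge explicitly by citing the monotony clause of Observation~\ref{properties}.
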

\begin{proof} This basically follows from the transitivity of $\leq$.

For SI, assume $x\in out_1 (G,a)$ and $b\leq a$.  Let $x_1, ..., x_n$
be the bodies of the rules in $G$ that make $x\in out_1 (G,a)$
true. From definition $x_1, ..., x_n\in\uparrow a$. By transitivity of
$\leq$, $x_1, ..., x_n\in\uparrow b$, which suffices for $x\in out_1
(G,b)$.

The argument for WO is just symmetrical to that for SI.
Assume $x\in out_1 (G,a)$ and $x\leq y$.  Let $z$ be the infimum of
$\{b_1, ..., b_n\}$, where $b_1, ..., b_n$ are the heads of the rules
in $G$ that make $x\in out_1 (G,a)$ true. From definition,
$x\in\uparrow z$. By transitivity of $\geq$, $y\in\uparrow z$, which
suffices for $y\in out_1 (G,a)$.
\end{proof}
SI and WO are two of the three rules that are characteristic of the account of $out_1$ as described in~\cite{mvt2000}. The last one is \\

$
\begin{array}{ll}
  \textrm{(AND)} & \displaystyle\frac{(a,x) \hspace{1cm}
  (a,y)}{(a,x\wedge y)} \\\\
\end{array}
$

The argument for AND is more involved. 
For the sake of clarity we
proceed in two steps, and prove first a qualified form of the rule,
the restriction being that the pairs to which the rule is applied
appear explicitly in $G$.
\begin{Obs}[Restricted AND] Suppose $(a,x), (a,y)\in G$. Assume 
$out_1$ is taken as in definition~\ref{simple-minded}. We have $x,y\in out_1 (G,a)$, but also
$x\wedge y\in out_1 (G,a)$.
\end{Obs}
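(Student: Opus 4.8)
The plan is to unfold Definition~\ref{simple-minded} and show directly that $x$, $y$ and $x\wedge y$ all sit in $\uparrow\inf(G(\uparrow\inf(a)))$. Since the input is a singleton, $\inf(a)=a$ by reflexivity, so $\uparrow\inf(a)=\uparrow a$, and what matters is which nodes are reachable by a jump from some node above $a$.

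First I would observe that, since $(a,x)\in G$ and $(a,y)\in G$, and trivially $a\in\uparrow a$, both $x$ and $y$ belong to $G(\uparrow a)$. Write $G(\uparrow a)=\{c_1,\dots,c_k\}$ (finite, since $\mathcal{L}$ and $G$ are finite) and let $z=\inf(c_1,\dots,c_k)=\inf(G(\uparrow a))$. Then $z\leq c_i$ for every $i$; in particular, because $x$ and $y$ are among the $c_i$, we get $z\leq x$ and $z\leq y$. By Lemma~\ref{fact2aboutinf}, from $z\leq x$ and $z\leq y$ we conclude $z\leq\inf(x,y)=x\wedge y$. Hence $x\wedge y\in\uparrow z=\uparrow\inf(G(\uparrow a))=out_1(G,a)$, which is the second claim. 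The first claim, $x,y\in out_1(G,a)$, is immediate: $z\leq x$ and $z\leq y$ give $x,y\in\uparrow z$ directly (alternatively, it is just an instance of the inclusion property in Observation~\ref{properties} together with $x,y\in G(\uparrow a)$).

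I do not expect any serious obstacle here; the restricted form is essentially a bookkeeping exercise, and the only nontrivial ingredient is Lemma~\ref{fact2aboutinf}, which lets us pass from ``$z$ is below each of $x,y$'' to ``$z$ is below their meet''. The genuine difficulty is deferred to the unrestricted AND rule, where the pairs to which one applies AND need not appear in $G$: there one must argue that if $x,y\in out_1(G,a)$ then the relevant infima behave well, which requires the idempotence part of Observation~\ref{properties} (and hence finiteness of $\mathcal{L}$) to collapse the nested $\uparrow\inf$ construction. But for the restricted statement above, the proof is the short chain: $a\in\uparrow a \Rightarrow x,y\in G(\uparrow a) \Rightarrow \inf(G(\uparrow a))\leq x,y \Rightarrow \inf(G(\uparrow a))\leq x\wedge y \Rightarrow x\wedge y\in out_1(G,a)$.
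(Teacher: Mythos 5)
Your proof is correct, and it reaches the conclusion by a slightly different (and in one respect cleaner) route than the paper. The paper's own argument starts from $x\wedge y\in\uparrow\inf\{x,y\}$, notes $\{x,y\}\subseteq G(\{a\})$, and then applies the monotony property of $\uparrow\inf$ (Observation~\ref{properties}(ii)) to get $\uparrow\inf\{x,y\}\subseteq\uparrow\inf G(\{a\})$; it tacitly identifies $\uparrow\inf G(\{a\})$ with a subset of $out_1(G,a)=\uparrow\inf G(\uparrow a)$, which strictly speaking needs one more appeal to $G(\{a\})\subseteq G(\uparrow a)$ and monotony. You instead work directly with $z=\inf G(\uparrow a)$, observe $z\leq x$ and $z\leq y$ because $x,y\in G(\uparrow a)$, and invoke Lemma~\ref{fact2aboutinf} to pass to $z\leq x\wedge y$. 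The two routes are logically interchangeable (monotony is itself proved from the lemmas on infima), but yours avoids the detour through $G(\{a\})$ and lands immediately in the set that defines $out_1(G,a)$. Your closing remark correctly locates where the real work lies: in the unrestricted AND rule, where the witnessing heads for $x$ and for $y$ may come from different rules in $G$ and one must show that the infimum of their union still sits below $x\wedge y$.
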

\begin{proof} We have $x\in out_1 (G,a)$, because $x\in G(\uparrow a)$ and $\inf  G(\uparrow a)\leq x$. The argument for  $x\in out_1 (G,a)$ is similar. We now show $x\wedge y\in out_1 (G,a)$. Clearly
$x\wedge y\in\uparrow\inf \{x,y\}$. Since $(a,x)$ and $(a,y)$ are in $G$, we have
$x,y\in G(\{a\})$, viz $\{x,y\}\subseteq G(\{a\})$. By monotony for $\uparrow\inf$ (\emph{aka} observation
 \ref{properties} (ii)) $x\wedge y\in\uparrow\inf \{x,y\}\subseteq \uparrow\inf  G(\{a\})$ so 
$x\wedge y\in out_1 (G,a)$ as required.
\end{proof}
The argument for the ``restricted AND" rule can be restated informally
as follows. You are in $x$. You jumped there from point $a$. You want
to know if you can move to $x\wedge y$. You go down the ordering until
you can see $y$ as head of another bridge from the same point. If you
can see $y$ that way, the jump is allowed.

Now we show that the rule for AND holds in its plain form. The
argument makes use of Lemmas~\ref{factaboutinf}
ands~\ref{fact2aboutinf}.

\begin{Obs}[Full AND] If $out_1$ is taken as in definition~\ref{simple-minded}, then  $out_1$ validates\\

$
\begin{array}{ll}
  \textrm{(AND)} & \displaystyle\frac{(a,x) \hspace{1cm}
  (a,y)}{(a,x\wedge y)} \\\\
\end{array}
$
\end{Obs}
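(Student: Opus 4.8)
The plan is to reduce the full rule to the restricted one already established, using the two lemmas about infima together with the idempotence part of Observation~\ref{properties}. Suppose $x\in out_1(G,a)$ and $y\in out_1(G,a)$; I must show $x\wedge y\in out_1(G,a)$, i.e. $\inf G(\uparrow\inf a)\leq x\wedge y$. Write $B=G(\uparrow\inf a)$ for the set of jump-points reachable from $a$, so that $out_1(G,a)=\uparrow\inf B$. By hypothesis $\inf B\leq x$ and $\inf B\leq y$. By Lemma~\ref{fact2aboutinf} (applied to the two-element case), $\inf B\leq\inf(x,y)=x\wedge y$, and hence $x\wedge y\in\uparrow\inf B=out_1(G,a)$.

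So the argument is essentially immediate once things are unwound: the first step is to name the set $B=G(\uparrow\inf a)$ of available jump-points and record that $out_1(G,a)=\uparrow\inf B$; the second step is to restate the hypotheses $x,y\in out_1(G,a)$ as $\inf B\leq x$ and $\inf B\leq y$; the third step is to invoke Lemma~\ref{fact2aboutinf} to conclude $\inf B\leq x\wedge y$; the fourth step is to read this back as $x\wedge y\in\uparrow\inf B=out_1(G,a)$.

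The only place where real care is needed — and where Lemma~\ref{factaboutinf} and the idempotence clause genuinely earn their keep — is if one wants to treat the general input set $A$ rather than a singleton $a$, or if one prefers a proof that mimics the informal "go down the ordering" narrative by starting from the restricted rule. In that variant one would argue: $x$ and $y$ each lie above $\inf G(\uparrow\inf A)$, so $x\wedge y$ does too by Lemma~\ref{fact2aboutinf}; but one should double-check the edge case where $x$ or $y$ is not literally of the form $G(b)$ for $b\in\uparrow\inf A$ but only lies above such a point, which is exactly why $out_1$ is defined with the outer $\uparrow\inf$ and why Observation~\ref{properties}(iii) guarantees the construction is stable under re-closing. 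I expect this bookkeeping — making sure the set $B$ over which we take the infimum is the \emph{same} in the two hypotheses and in the conclusion — to be the main (and only) obstacle; there is no genuine lattice-theoretic difficulty beyond Lemma~\ref{fact2aboutinf}.

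Thus the proof I would write is the four-line argument above, with a remark that, unlike the restricted case, here we do \emph{not} need $x$ or $y$ to name a generator's head: we only use that $\inf B$ is a common lower bound of $x$ and $y$, and Lemma~\ref{fact2aboutinf} turns a common lower bound of two points into a lower bound of their meet.
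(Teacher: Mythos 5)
Your proof is correct, and it reaches the conclusion by a genuinely shorter route than the paper's. The paper's proof first extracts from the two hypotheses two families of generating rules $(a_1,b_1),\dots,(a_n,b_n)$ and $(d_1,c_1),\dots,(d_m,c_m)$ whose bodies lie in $\uparrow a$ and whose heads satisfy $\inf(b_1,\dots,b_n)\leq x$ and $\inf(c_1,\dots,c_m)\leq y$; its announced ``crux'' is then to show that the infimum of the \emph{union} of the two head-sets is below $x\wedge y$, which takes Lemma~\ref{factaboutinf} (to pass from the union down to each part), transitivity, and finally Lemma~\ref{fact2aboutinf}. You bypass the decomposition entirely by fixing $B=G(\uparrow a)$ once and for all, noting that both hypotheses say precisely that $\inf B$ is a common lower bound of $x$ and $y$, and applying Lemma~\ref{fact2aboutinf} directly; in the finite setting the paper explicitly restricts to, this is legitimate and Lemma~\ref{factaboutinf} is not needed at all. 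What the paper's extra bookkeeping buys is robustness for the infinite case: there, membership in $\uparrow\inf G(\uparrow a)$ would be witnessed by finitely many heads via some compactness property, the witnesses for $x$ and for $y$ need not coincide, and merging the two head-sets via Lemma~\ref{factaboutinf} becomes the substantive step — exactly the obstacle you correctly identify as ``making sure the set $B$ is the same in the two hypotheses.'' Your closing remark is accurate; the worries in your third paragraph about idempotence and about heads not literally being of the form $G(b)$ are unnecessary, since the hypothesis $x\in\uparrow\inf B$ is the only thing the argument uses.
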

\begin{proof} Assume $x,y\in out_1 (G,a)$. We want to show $x\wedge y\in out_1 (G,a)$. From the assumption we get that
$G$ contains bridges of the form 
$$(a_1, b_1), ..., (a_n, b_n),(d_1, c_1), ..., (d_m, c_m) $$ such that
\begin{itemize}
\item $a_1,...,a_n,d_1,...,d_m\in \uparrow (a)$ ($a$ is below all the bodies)
\item $\inf (b_1,...,b_n)\leq x$
\item $\inf (c_1,...,c_m)\leq y$
\end{itemize}
By definition $x\wedge y\leq x$ and $x\wedge y\leq y$. 

The crux of the argument consists in showing that $x\wedge y$ is above the infimum of the union of all the heads from which $x$ and $y$ originate, viz
$$\inf (b_1,...,b_n, c_1,...,c_m)\leq x\wedge y$$
This is explained in Figure~\ref{and}. By Lemma~\ref{factaboutinf}, we have
\begin{flalign}
\inf (b_1,...,b_n, c_1,...,c_m)\leq \inf (b_1,...,b_n) \\
\inf (b_1,...,b_n, c_1,...,c_m)\leq \inf (c_1,...,c_m) 
\end{flalign}

\begin{figure}[ht]
     \caption{$x\wedge y$ above the infimum of the union of all the relevant heads}\label{and}
    \begin{center}
    \begin{tikzpicture}
        \tikzstyle{every node} = [rectangle]
        \node (s1) at (5,0) {$\inf (b_1,...,b_n, c_1,...,c_m)$};
        \node (s2) at (10,1) {$\inf (c_1,...,c_m)$};
        \node (s3) at (0,1) {$\inf (b_1,...,b_n)$};
        \node (s4) at (-1,2) {$b_1$};
         \node (s5) at (1,2) {$b_n$};
        \node (s6) at (0,2) {$...$};
        \node (s12) at (0,3) {$x$};
        \node (s7) at (9,2) {$c_1$};
         \node (s8) at (11,2) {$c_m$};
        \node (s9) at (10,2) {$...$};
        \node (s13) at (10,3) {$y$};
        \node (s11) at (5,2) {$x\wedge y$};
        \foreach \from/\to in {s1/s2, s1/s3, s3/s4, s3/s6, s3/s5,s5/s12,s4/s12, s6/s12,s1/s12, s2/s7,s2/s8,s2/s9,s7/s13,s8/s13,s9/s13,s1/s13,s11/s12,s11/s13,s1/s11}
            \draw[-] (\from) -- (\to);
    \end{tikzpicture}
    \end{center}
\end{figure}
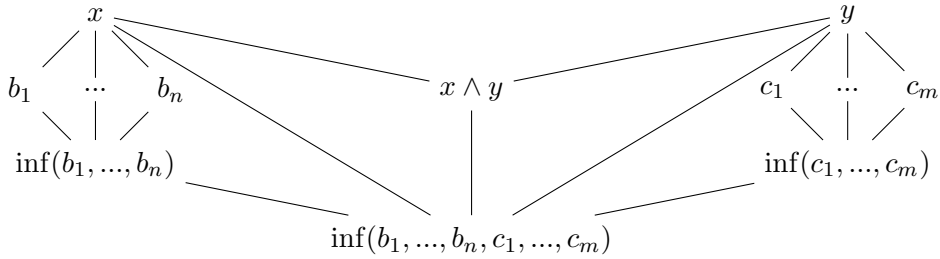
By transitivity,
\begin{flalign}
\inf (b_1,...,b_n, c_1,...,c_m)\leq x\\
\inf (b_1,...,b_n, c_1,...,c_m)\leq y
\end{flalign}
From Lemma~\ref{fact2aboutinf} it follows that
$$\inf (b_1,...,b_n, c_1,...,c_m)\leq x\wedge y$$
With this result established, the remaining of the proof is easy. Since
$$
a_1,...,a_n,d_1,...,d_m\in \uparrow (a)
$$
we have
$$
b_1,...,b_n,c_1,...,c_m\in G(\uparrow (a))
$$
But we have just seen that  $\inf (b_1,...,b_n, c_1,...,c_m)\leq x\wedge y$. So  $x\wedge y\in out_1 (G,a)$
as required.
\end{proof}
$x\in out_1 (G,a)$ and $(a,x)\in out_1 (G,a)$ are two different but
equivalent ways of saying the same thing. This is what makes 
the axiomatic characterization possible. We write $(a, x)\in
deriv_1(G)$, or equivalently $x\in
deriv_1(G, a)$, to indicate that $(a, x)$ is derivable from $G$. 
The formal definition is as in~\cite{mvt2000}. 
A pair
$(a,x)$ is said derivable from $G$ iff it is in the least set that
includes $G$, contains the pair $(1,1)$, and is closed under the
rules of the system (SI), (WO) and (AND).

Makinson and van der Torre make two extra conventions, which will be
adopted here too.  $(A,x)$ is said to be derivable from $G$ iff $(a,
x)$ is derivable from $G$ for some conjunction $a=a_1 \wedge ...\wedge
a_n$ of elements in $A$.  They also understand the conjunction of zero
formulae to be a tautology, so that $(\emptyset,x)$ is derivable from
$G$ iff $(1,x)$.


For finite $A$ and $G$, we have:

\begin{theo}[Soundness]$deriv_1 (G,A)\subseteq out_1 (G,A)$ 
\end{theo}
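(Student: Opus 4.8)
The plan is to prove soundness by induction on the length of the derivation of a pair $(a,x)$ from $G$, using the conventions Makinson and van der Torre adopt for the meaning of ``$(A,x)$ derivable''. Since $deriv_1(G,A)$ unfolds, via those conventions, to the existence of a conjunction $a=a_1\wedge\dots\wedge a_n$ of elements of $A$ with $(a,x)\in deriv_1(G)$, and since $\inf A\leq a_i$ for each $i$ gives $\inf A\leq a=\inf(a_1,\dots,a_n)$ by Lemma~\ref{fact2aboutinf}, together with monotony (Observation~\ref{properties}(ii)) in the form $\uparrow\inf\{a\}\subseteq\uparrow\inf A$ and hence $out_1(G,a)\subseteq out_1(G,A)$, it suffices to show the ``pointwise'' statement $deriv_1(G,a)\subseteq out_1(G,a)$ for every single point $a\in\mathcal L$. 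That reduction is the first thing I would state explicitly.

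For the pointwise claim I would argue that the set of pairs $\{(a,x): x\in out_1(G,a)\}$ satisfies the three closure conditions defining $deriv_1(G)$, so it contains $deriv_1(G)$. First, $G\subseteq\{(a,x):x\in out_1(G,a)\}$: if $(a,x)\in G$ then $a\in\uparrow a$ so $x\in G(\uparrow a)=G(\uparrow\inf\{a\})$, whence $\inf G(\uparrow a)\leq x$ and $x\in out_1(G,a)$. Second, the pair $(1,1)$ is in the set: $\uparrow 1=\{1\}$, and $1\in\uparrow\inf(G(\{1\}))$ since $1$ is the top element, so $1\in out_1(G,1)$. Third, closure under (SI), (WO), (AND) is exactly the content of the three observations already proved in the excerpt (SI and WO in Observation~\ref{properties}'s companion, and Full AND in the last observation before the theorem). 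Hence $deriv_1(G)\subseteq\{(a,x):x\in out_1(G,a)\}$, which is the pointwise claim.

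Then I would close the argument: take $x\in deriv_1(G,A)$; by the MvT convention there is a conjunction $a=a_1\wedge\dots\wedge a_n$ of members of $A$ (or $a=1$ in the degenerate case $A=\emptyset$) with $(a,x)\in deriv_1(G)$, hence $x\in out_1(G,a)$ by the pointwise claim. Since each $a_i\in A$ we have $\inf A\leq a_i$, so by Lemma~\ref{fact2aboutinf} $\inf A\leq\inf(a_1,\dots,a_n)=a$, giving $\{a\}\subseteq\uparrow\inf A$ and thus (monotony) $\uparrow\inf\{a\}\subseteq\uparrow\inf A$; pushing this through $G$ and $\uparrow\inf$ once more, monotony yields $out_1(G,a)\subseteq out_1(G,A)$, so $x\in out_1(G,A)$.

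The only genuinely delicate point is making the reduction airtight: one must be careful that the notion ``derivable'' the theorem refers to is the two-layered one (derivations produce point-to-point pairs $(a,x)$, and then $(A,x)$ is declared derivable when some finite conjunction over $A$ derives $x$), and that $\inf$ of a finite set of elements of $A$ really does sit above $\inf A$ — which is precisely why Lemma~\ref{fact2aboutinf} and finiteness of $A$ are invoked. I expect this bookkeeping, rather than any new mathematical idea, to be the main obstacle; the closure-under-rules half is already done in the preceding observations.
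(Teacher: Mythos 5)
Your proof is correct and is essentially the argument the paper intends: the paper's own ``proof'' is just a citation to Makinson and van der Torre, whose soundness argument is exactly this induction on derivations (equivalently, your closure-set formulation $deriv_1(G)\subseteq\{(a,x):x\in out_1(G,a)\}$), with the rule-validity steps already supplied by the preceding observations on SI, WO and AND and the base cases handled as you do. One cosmetic remark: your step from $\{a\}\subseteq\uparrow\inf A$ to $\uparrow\inf\{a\}\subseteq\uparrow\inf A$ is not a literal instance of monotony (since $a$ need not be an element of $A$); it follows instead from $\inf A\leq a$ and transitivity (or from monotony combined with idempotence), which does not affect correctness.
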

\begin{proof} Same as in~\cite{mvt2000}.
\end{proof}
\begin{coro}[Completeness] $out_1 (G,A)\subseteq deriv_1 (G,A)$ 
\end{coro}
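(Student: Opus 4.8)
The plan is to establish completeness by the standard I/O-logic argument: show that every pair $(a,x)$ with $x\in out_1(G,a)$ is already derivable using (SI), (WO) and (AND), and then lift this to arbitrary finite input sets $A$ by the conjunction convention. Since we work with finite $\mathcal{L}$ and finite $G$, every infimum in sight is the meet of finitely many points, so there are no convergence subtleties to worry about.

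First I would unfold the definition. Suppose $x\in out_1(G,a)$, i.e. $x\in\uparrow\inf\bigl(G(\uparrow\inf A)\bigr)$. Reducing to the singleton case via the conjunction convention, we may assume $A=\{a\}$ with $a=\inf A$, so that $\uparrow\inf A=\uparrow a$. Then $G(\uparrow a)$ is a finite set of heads, say $\{b_1,\dots,b_k\}$, arising from bridges $(a_1,b_1),\dots,(a_k,b_k)\in G$ with each $a_i\geq a$ (i.e. $a\leq a_i$). The hypothesis $x\in out_1(G,a)$ says precisely that $\inf(b_1,\dots,b_k)\leq x$.

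Next I would run the derivation. For each $i$, from $(a_i,b_i)\in G$ and $a\leq a_i$, rule (SI) gives $(a,b_i)\in deriv_1(G)$. Applying (AND) repeatedly (a finite induction on $k$) yields $(a,b_1\wedge\cdots\wedge b_k)\in deriv_1(G)$, and since in a lattice the iterated meet equals $\inf(b_1,\dots,b_k)$, we have $(a,\inf(b_1,\dots,b_k))\in deriv_1(G)$. Because $\inf(b_1,\dots,b_k)\leq x$, rule (WO) then delivers $(a,x)\in deriv_1(G)$, i.e. $x\in deriv_1(G,a)$. Finally, for a general finite $A$, if $x\in out_1(G,A)=\uparrow\inf\bigl(G(\uparrow\inf A)\bigr)$, put $a=\inf A$; the argument above gives $x\in deriv_1(G,a)$, and since $a$ is the meet of the elements of $A$, the convention that $(A,x)$ is derivable whenever $(a,x)$ is derivable for a conjunction $a$ of elements of $A$ yields $x\in deriv_1(G,A)$. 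One should also dispatch the degenerate cases: if $G(\uparrow a)=\emptyset$ then $\inf\emptyset=1$, so the only $x$ in the output is $x=1$, which is derivable from the axiom $(1,1)$ together with (SI) (from $a\leq 1$); and $A=\emptyset$ reduces to the input $1$ by the conjunction-of-zero-formulae convention.

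The main obstacle — really the only non-bookkeeping point — is the step identifying the iterated binary meet $b_1\wedge\cdots\wedge b_k$ produced by successive applications of the syntactic rule (AND) with the set-theoretic $\inf(b_1,\dots,b_k)$ used in Definition~\ref{simple-minded}. This is where Lemmas~\ref{factaboutinf} and~\ref{fact2aboutinf} do their work: associativity of meet and the coincidence of pairwise-meet iteration with the infimum of a finite set both follow from the characterization of $\inf$ as the greatest lower bound, exactly as in Observation~\ref{properties}. Everything else is a routine transcription of the Makinson–van der Torre completeness proof, with $Cn$ replaced by $\uparrow\inf$ and propositional conjunction replaced by lattice meet; this is presumably why the authors can state it as a corollary of soundness plus the preceding observations.
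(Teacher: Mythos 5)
Your proposal is correct and follows essentially the same route as the paper's own proof: (SI) to pull every relevant bridge down to the common body $\inf A$, (AND) to collect the heads into their meet, (WO) to weaken up to $x$, the conjunction convention to pass from $\inf(a_1,\dots,a_k)$ to the set $A$, and a separate treatment of the degenerate case $G(\uparrow\inf A)=\emptyset$ via $(1,1)$ and (SI). The only cosmetic difference is that you make explicit the identification of the iterated binary meet with the finite infimum, which the paper leaves implicit.
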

\begin{proof} 
A re-run of the proof given in~\cite{mvt2000}. 
Assume $x\in \uparrow \inf(G(\uparrow\inf (A))$. We need to show
that $(A,x)$ is derivable from $G$. 
\\

\noindent
\underline{Case 1:} $G(\uparrow\inf A)=\emptyset$. In that case $x$ is $1$.
 But $(1,1)$ is derivable from
$G$. 
For any conjunction $a=a_1 \wedge ...\wedge a_n$ of elements in $A$,
$a\leq 1$. By (SI), $(a,1)$ is derivable from $G$. By definition of derivability, $(A,1)$ is
derivable from $G$ as required. \\

\noindent
\underline{Case 2:} $G(\uparrow\inf A)\not=\emptyset$. 
 So
$G$ contains (finitely many) rules of the form 
$(b_1, x_1), ..., (b_n, x_n)$ such that
\begin{itemize}
\item $b_1,...,b_n\in \uparrow\inf A$ 
\item $\inf (x_1,...,x_n)\leq x$
\end{itemize}
We have $b_1,...,b_n \in \uparrow\inf (a_1,...,a_k)$, where $a_1$, ... $a_k$ are all the elements of $A$.
By definition of derivability,
$(b_1, x_1), ...,$ and $ (b_n, x_n)$ are all derivable from $G$. We
construct a derivation of $(A,x)$ as follows

\begin{prooftree}
          \AxiomC{$(b_1,x_1)$} 
\RightLabel{SI} 	\UnaryInfC{$(\inf (a_1,...,a_k),x_1)$}
	\AxiomC{$\ldots\ldots$} 
\noLine\UnaryInfC{$\;$}
	\AxiomC{$(b_n,x_n)$} 
\RightLabel{SI} 	\UnaryInfC{$(\inf (a_1,...,a_k),x_n)$}
\RightLabel{AND} 	\TrinaryInfC{$(\inf (a_1,...,a_k), x_1\wedge \ldots\wedge x_n)$}
\LeftLabel{WO} \UnaryInfC{$(\inf (a_1,...,a_k), x)$}
	\end{prooftree}
$\inf (a_1,...,a_k)$ is the conjunction of all the elements in
$A$. So from the definition of derivability it follows that $(A, x)$
is derivable.
\end{proof}
The generalization to an input set $A$ and a set $G$ of generators of arbitrary cardinality is left as a topic for future research.

\section{Conclusion and future research}
We have described a geometrical account of I/O logic. A soundness and completeness results was reported  for the simplest I/O operation called simple-minded, and in the finite case only.
We end this note by discussing some topics for future investigation.

\subsection{Other I/O operations}

How to handle the other I/O operations discussed in \cite{mvt2000}?

\subsection{Geometrical defaults}

We can do geometrical defaults in input output logic.  Start with a
default theory $( W, D)$, A default $d$ in $D$ reads: if $a$ is in the
theory and it is consistent that $b$ then let $c$ be in the theory.

$d$ can be read as a tourist travel instruction in the geometry of the
the partial order, as follows: If you are above $a$ and you are not
above $\neg b$ then travel higher in the order until you are also
above $c$. A default extension is a point where you no longer needs to move.
$W$  is your starting point.

This is an informal characterization. One still needs to work out the formal details. 

\subsection{Jump back}

One could work with two sets of jump points, $G_1$ and $G_2$.
Some of the jumps in $G_2$ undo those made in $G_1$ ($G_2$ goes
backwards, but of course it can do more). How to implement this idea?

\subsection{Constraints}

The idea is to combine default reasoning and contrary-to-duty (CTD) reasoning. When there is a
violation, the contrary-to-duty can be given priority over the primary
obligation. This is the standard approach to CTDs in I/O
logics. To avoid inconsistency, we get rid of the primary obligation.
However, the violation can also be seen an exception, in which
case the primary obligation must be given priority. We need a mechanism to
deal with the two views.


\bibliographystyle{plain}
\bibliography{432}

\end{document}